
\documentclass[conference,letterpaper]{IEEEtran}

\addtolength{\topmargin}{9mm}

%
%
\usepackage[utf8]{inputenc} 
\usepackage[T1]{fontenc}
\usepackage{url}
\usepackage{ifthen}
\usepackage{cite}
\usepackage[cmex10]{amsmath} 


\usepackage{graphicx}
\usepackage{amssymb}
\usepackage{amsthm}
\usepackage{amsfonts}
\usepackage{mathtools}
\usepackage{dsfont}
\mathtoolsset{showonlyrefs=true}
\usepackage{algorithm}
\usepackage{algorithmic}

\usepackage{color}
\usepackage{float}



\newcommand{\rem}[1]{}

\newtheorem{lemma}{Lemma}

\newtheorem{theorem}{Theorem}

\newcommand{\bre}{\begin{equation}}
\newcommand{\ere}{\end{equation}}

\newcommand{\ee}\]
\newcommand{\bra}{\begin{eqnarray}}
\newcommand{\era}{\end{eqnarray}}
\newcommand{\bfg}{\begin{figure}[hbtp]}
\newcommand{\efg}{\end{figure}}

\newcommand{\bit}{\begin{itemize}}
\newcommand{\eit}{\end{itemize}}
\newcommand{\ben}{\begin{enumerate}}
\newcommand{\een}{\end{enumerate}}

\newcommand{\given}{\: | \:}

\newcommand{\bc}{{\bf c}}

\newcommand{\bm}{{\bf m}}

\newcommand{\hepsilon}{\hat{\epsilon}}

\newcommand{\baa}{\begin{eqnarray*}}
\newcommand{\eaa}{\end{eqnarray*}}

\newcommand{\bs}{{\bf s}}

\newcommand{\bX}{{\bf X}}

\newcommand{\xor}{\oplus}
\newcommand{\bu}{{\bf u}}

\newcommand{\bx}{{\bf x}}
\newcommand{\by}{{\bf y}}
\newcommand{\bz}{{\bf z}}

\newcommand{\bzr}{{\bf 0}}

\newcommand{\cS}{{\cal S}}

\newcommand{\cX}{{\cal X}}
\newcommand{\cY}{{\cal Y}}

\newcommand{\cU}{{\cal U}}

\newcommand{\cC}{{\mathcal{C}}}

\newcommand{\defined}{\triangleq}

\def\defined{\: {\stackrel{\scriptscriptstyle \Delta}{=}} \: }

\newfont{\boldlarge}{msbm10 scaled 1100}

\newcommand{\comment}[1]{}

\newlength{\tmpbigbar}


\interdisplaylinepenalty=2500 

\hyphenation{op-tical net-works semi-conduc-tor}

\begin{document}
\title{On Polar Coding for Binary Dirty Paper}

\author{%
\IEEEauthorblockN{Barak Beilin and David Burshtein}
\IEEEauthorblockA{School of Electrical Engineering\\
                  Tel-Aviv University\\
				  Tel-Aviv 6997801, Israel\\
                  Email: barakbei@mail.tau.ac.il, burstyn@eng.tau.ac.il}
}

\maketitle

\begin{abstract}
The problem of communication over binary dirty paper (DP) using nested polar codes is considered. An improved scheme, focusing on low delay, short to moderate blocklength communication is proposed. Successive cancellation list (SCL) decoding with properly defined CRC is used for channel coding, and SCL encoding without CRC is used for source coding. The performance is compared to the best achievable rate of any coding scheme for binary DP using nested codes.
A well known problem with nested polar codes for binary DP is the existence of frozen channel code bits that are not frozen in the source code. These bits need to be retransmitted in a second phase of the scheme, thus reducing transmission rate. We observe that the number of these bits is typically either zero or a small number, and provide an improved analysis, compared to that presented in the literature, on the size of this set and on its scaling with respect to the blocklength when the power constraint parameter is sufficiently large or the channel crossover probability sufficiently small.
\end{abstract}


\section{Introduction} \label{sec:introduction}
Consider the problem of transmission over a side information channel with non-causal side information, also known as the Gelfand-Pinsker (GP) problem.
Applications include watermarking codes, memories with defects, write once memories and transmission over broadcast channels.
In the GP problem the encoder needs to send a message $M$ reliably over some memoryless channel $W(y\given x,s)$ where $x\in\cX$ is the input to the channel, $y\in\cY$ is the output, and $s\in\cS$ is the channel state. For each transmitted symbol, $x$, the state $s$ is obtained by i.i.d. sampling of some given source random variable $S\in\cS$. The encoder observes the channel state vector, $\bs=(s_0,s_1,\ldots,s_{N-1})$, non-causally, prior to transmission. It then constructs a codeword $\bx=(x_0,x_1,\ldots,x_{N-1})$ which is a function of the message $\bm$ and the state vector $\bs$.
The decoder observes only the vector of channel outputs, $\by=(y_0,y_1,\ldots,y_{N-1})$ and constructs the decoded codeword $\hat{\bm}$ from $\by$.

The binary dirty paper (DP) problem depicted in Figure \ref{fig:gpscheme} is a side information problem
with $\cX = \cS = \cY = \{0,1\}$, $S\sim {\rm Ber}(1/2)$ (${\rm Ber}(q)$ denotes a Bernoulli $(0,1)$ random variable with probabilities $(1-q,q)$), and the channel $W(y\given x,s)$ is defined by
\bre
Y = X \xor S \xor Z 
\ere
where $\xor$ denotes XOR, and $Z\sim {\rm Ber}(p)$ for $p\in(0,1/2)$.
\begin{figure}[htbp]
	\centering
	\includegraphics[width=0.9\columnwidth]{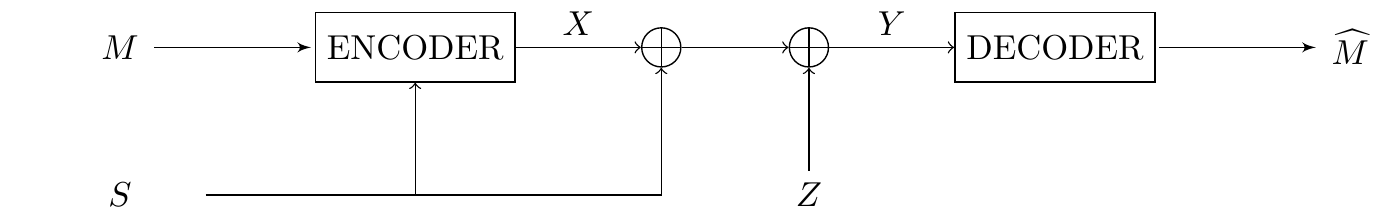} 
	\vspace{-10pt}
	\caption{The binary dirty paper problem.}
	\label{fig:gpscheme}
\end{figure}
In this problem there is a ``power constraint'' which comes in one of two possible forms. The first form is an average power constraint: Given some $D\in(0,1/2)$, on average a fraction at most $D$ of the channel input bits, $X$, are ones. That is, ${\rm E} w_H(\bX) / N \le D$ where $\bX$ is the codeword and $w_H()$ denotes Hamming weight. The second form is an individual codeword power constraint, which is stronger than the average power constraint. In this case, each codeword, $\bx$, needs to satisfy $w_H(\bx)/N \le D$.
An error event in the communication happens when either the receiver does not decode the correct transmitted message, or (under the individual codeword power constraint) when the encoder cannot obtain a codeword that satisfies the required power constraint.
For $D>p$ the capacity of the binary DP problem is given by
\bre
C_{\rm DP} = h_2(D) - h_2(p)
\ere
where $h_2(\cdot)$ is the binary entropy function (base $2$).
Following \cite{korada2010polar}, we will assume in this paper that $D>p$ since for the other case, $D<p$, the capacity as a function of $D$ can be achieved by time sharing with the point with $D=0$ and $R=0$.

In \cite[Section VIII]{korada2010polar} Arikan's polar codes \cite{arikan2009channel} were extended to the problem of binary DP using a polar nested codes structure.
Given some pair $(p,D)$ such that $0<p<D<1/2$ and blocklength $N$, one constructs two polar codes with blocklength $N$. The first is a standard binary polar code, $\cC_c$, with frozen set $F_c$,
designed for reliable communication over the binary symmetric channel, BSC($p$). The second code, $\cC_s$, with frozen set $F_s$, is a binary polar code for lossy source coding (quantization) \cite{korada2010polar} of the source $S$ with distortion $D$. The design of $\cC_s$ is very similar to the design of a polar channel code for a BSC($D$) channel, except for the threshold defining its frozen set.
The polar coding scheme for the binary DP problem was formulated in \cite{korada2010polar} under an average power constraint but it is also valid under an individual codeword power constraint \cite{bur_str_full_version}. To describe the method, first assume that $F_c \subseteq  F_s$ such that $\cC_c$ and $\cC_s$ are nested polar codes. As usual, denote by $\bu$ the message and frozen bits of a polar code, and by $\bc = \bu G_N$ the corresponding polar codeword where $G_N$ is the generating matrix defined in \cite[Eq. (70)]{arikan2009channel}. The encoder first sets $\bu_{F_c}=\bzr$ and $\bu_{F_s\setminus F_c}=\bm$ where $\bm$ denotes the information bits that need to be transmitted. This defines a polar source code that we denote by $\cC_s \left( F_s,\bu_{F_s}(\bm) \right)$. The encoder then observes $\bs$ and obtains a polar codeword $\bs' \in \cC_s \left( F_s,\bu_{F_s}(\bm) \right)$ that satisfies the power constraint (e.g., under the individual codeword power constraint, $w_H(\bs \xor \bs')\le D$) using a successive cancellation (SC) encoding algorithm which is a randomized version of the standard SC decoding algorithm \cite{arikan2009channel} (as explained in \cite{korada2010polar}, in practice the standard SC decoding algorithm can be used by the encoder without modification, but the proof requires the randomized version of the algorithm).
Now the encoder transmits $\bx = \bs \xor \bs'$.
The decoder receives $\by = \bx \xor \bs \xor \bz = \bs' \xor \bz$. Noting that $\bs' \in \cC_c \left( F_c,\bu_{F_c}=0 \right)$, and that $F_c$ was designed to obtain a good channel code for the BSC($p$), the decoder can decode $\hat{\bu}_{F_c^c}$ ($F_c^c$ denotes the complementary of $F_c$) from $\by$ using the SC decoding algorithm. The decoded message is obtained as $\hat{\bm} = \hat{\bu}_{F_s\setminus F_c}$.
The rate of this communication scheme is
$
R = (|F_s| - |F_c|)/N
$.
Since for $N$ large enough $|F_s|/N \rightarrow h_2(D)$ and $|F_c|/N \rightarrow h_2(p)$, we have $R\rightarrow C_{\rm DP}$. That is, the scheme approaches capacity.

When the assumption $F_c \subseteq F_s$ is violated, it was suggested \cite{korada2010polar} to use a two phase transmission scheme. The first phase is identical to the one described above.
In the second phase, we transmit $\bu_{F_c\cap F_s^c}$ (or accumulate the bits of $\bu_{F_c\cap F_s^c}$ of many first phase transmissions and send them together).
In the second phase the transmitter ignores the power constraint and transmits $\bx=\bc \xor \bs$ such that the state noise is canceled. However, if $|F_c \cap F_s^c|$ is small then the damage to the power constraint will be negligible (we can compensate by decreasing the design distortion $D$ of $\cC_s$). 
The decoder starts by decoding $\bu_{F_c \cap F_s^c}$ from the second phase transmission, and then it can apply the decoding described above for the case $F_c \subseteq  F_s$.
Another possibility is to transmit in frames using the chaining construction \cite{hassani2014universal} as used in \cite{mondelli2015achieving} for achieving Marton's region for broadcast channels.
When using a sufficiently large number of frames, the rate loss of the chaining construction becomes negligible. However, the use of $L$ frames increases latency by a factor of $L$, and hence is not suitable for low delay communications.

In this paper we propose an improved, low delay communication scheme over binary DP using nested polar codes \cite{korada2010polar}, with CRC aided SC list (SCL) decoding \cite{tal2015list} for channel coding, and SCL encoding without CRC for source coding. The performance is compared to the best achievable rate of any coding scheme for binary DP using nested codes. We observed that typically the set $F_c \cap F_s^c$, that needs to be retransmitted in the second phase of the scheme \cite{korada2010polar} is zero for $D-p$ larger than some small threshold. Our main theoretical contribution is an improved analysis compared to that presented in \cite{korada2010polar} on $|F_c\cap F_s^c|$ and on its scaling with respect to the blocklength when $D$ is sufficiently large or $p$ sufficiently small.

\section{Polar SCL coding for binary dirty paper}
\label{sec:BDPproblem}
We now discuss the design of an SCL coding scheme based on the SCL decoder \cite{tal2015list} and the nested polar coding scheme of \cite{korada2010polar}. We first observe that lists are useful for improved lossy source coding since the encoder can choose the least distorted codeword from several possibilities. For example, using an SCL encoder with $L_s=50$ lists to encode a Ber($1/2$) source at rate 0.258, with a polar code with blocklength $N=1024$, the distortion was 0.217, compared to 0.226 when lists are not used. The theoretical minimum distortion for this rate is 0.21. Repeating the experiment with $N=4096$ yielded a distortion of 0.215 with $L_s=50$ lists compared to a distortion of 0.223 without lists.
Although the use of CRC verification provides a significant reduction in the error rate for channel coding \cite{tal2015list}, it is not helpful for lossy source coding. This is due to the fact that in this problem we are only interested in the distortion between the source vector and any codeword. Hence a range of codewords which are sufficiently close to the source vector can be used, rather than a single preferred codeword as in the channel coding problem, making the CRC rule irrelevant. Hence, we use SCL encoder with $L_s$ lists and without CRC, and SCL decoder with $L_c$ lists and with CRC.

The other important design consideration relates to the proper definition of the CRC code. In \cite{tal2015list}, $r$ CRC bits are computed from $N-|F_c|-r$ information bits. The $r$ CRC bits are then appended to the $N-|F_c|-r$ information bits. For polar coding over side information channels it would be problematic to compute the CRC this way since this would impose a difficult constraint on the encoder side (how to output a valid codeword for channel coding that satisfies the required distortion bound). To solve this problem, we compute the $r$ CRC bits only from the message bits that need to be transmitted to the decoder.
Without using CRC there are $|F_s \cap F_c^c|$ message bits. When using CRC we reduce this number to $|F_s \cap F_c^c|-r$ and compute the $r$ CRC bits from these message bits. We then append the $r$ CRC bits to the $|F_s \cap F_c^c|-r$ message bits, and place them in $F_s \cap F_c^c$ (as in \cite{korada2010polar} we set zeros in $F_c$). At the decoder side, in the last decoding stage only those lists for which the CRC is satisfied are considered as in \cite{tal2015list}. 

As in \cite[Eqs. (18)-(19)]{korada2010polar}, the frozen sets of the codes $\cC_s$ and $\cC_c$ are defined by
\begin{align}
F_s &= \left\{ i \: : \: Z^{(i)}_N(D)  \ge \delta_N(D) \right\} \label{eq:Fs}\\
F_c &= \left\{ i \: : \: Z^{(i)}_N(p)  \ge \delta_N(p) \right\} \label{eq:Fc}
\end{align}
where $Z_N^{(i)}(D)$ ($Z_N^{(i)}(p)$, respectively) is the Bhattacharyya parameter of the $i$-th sub-channel after\footnote{The basis of all the logarithms in this paper is 2.} $n=\log N$ polarization steps of a BSC($D$) (BSC($p$)) channel.
In \cite{korada2010polar}, $\delta_N(D) = 1 - \delta_N^2$ and $\delta_N(p) = \delta_N$.
Setting $\delta_N = \delta/N$, yields error probability at most $\delta$ and, by \cite[Lemmas 5, 6 and 7]{korada2010polar}, average distortion at most $D+\sqrt{2}\delta$. Thus, to meet a required distortion constraint, $D$, we design $F_s$ using a BSC($D'$) channel with $D'=D-\sqrt{2}\delta$. A similar statement can also be made regarding the individual codeword power constraint \cite[Theorem 2]{bur_str_full_version}.
In practice we set the thresholds $\delta_N(p)$ ($\delta_N(D)$, respectively)
such that the performance of a polar code under SCL decoding (encoding) with the set $F_c$ ($F_s$) yields the required error rate (distortion) performance.

\section{Analysis of $|F_c \cap F_s^c|$}
\label{sec:nestedness_analysis}
As was explained above, for low delay communications with polar codes using our SCL coding variant of the method in \cite{korada2010polar}, $|F_c \cap F_s^c|$ needs to be small (ideally zero).

Consider the definition of $F_s$ and $F_c$ in \eqref{eq:Fs}-\eqref{eq:Fc} and suppose that $\delta_N(D) = 1-\delta_N^2$ and $\delta_N(p) = \delta_N$ as in \cite{korada2010polar}. Then
\bre
F_c \cap F_s^c = \left\{ i \: : \: Z^{(i)}_N(p)  \ge \delta_N \:\:{\rm and}\:\: Z^{(i)}_N(D)  < 1 - \delta_N^2 \right\}
\label{eq:FcFsdef}
\ere
Define
\bre
\tilde{F}_c = \left\{ i \: : \: Z^{(i)}_N(p)  \ge 1-\delta_N^2 \right\}
\ere
The analysis in \cite{korada2010polar} asserts the following
\bre
|F_c \cap F_s^c| \le \left| F_c \setminus \tilde{F}_c \right| = o(N)
\label{eq:FcFs_ineq}
\ere
where the inequality is due to the degradedness of the sub-channel $W_N^{(i)}(D)$, corresponding to a BSC($D$), with respect to $W_N^{(i)}(p)$, corresponding to a BSC($p$) \cite{korada2010polar}. The equality is due to the polarization of the BSC($p$) channel.
A more refined argument, using scaling results of polar codes \cite{hassani2014finite, goldin2014improved} shows
\bre
|F_c \cap F_s^c| \le \left| F_c \setminus \tilde{F}_c \right| = O(N^{1-\alpha})
\label{eq:FcFsON}
\ere
for $\alpha=(1+1/0.2127)=0.175$ (this can be verified using the proofs of Theorem 1 and Theorem 2 in \cite{goldin2014improved}).

However, we observed that for small to moderate values of $N$, in the above bound of \cite{korada2010polar} (now formulated in terms of general threshold values, $\delta_N(D)$ and $\delta_N(p)$),
\bre
|F_c \cap F_s^c| \le \left| \hat{F}_c \right| \defined \left| \left\{ i \: : \: 
\delta_N(p) \le Z^{(i)}_N(p) < \delta_N(D) \right\} \right|
\ere
$|\hat{F}_c|$ is quite large for actual practical thresholds, $\delta_N(D)$ and $\delta_N(p)$, in the definitions of $F_s$ and $F_c$.
Fortunately, we have observed empirically that even though $|\hat{F}_c|$ tends to be relatively large, $|F_c \cap F_s^c|$ tends to be much smaller, and it vanishes for sufficiently large $D$ or sufficiently small $p$.
For example, consider the case where $L_c=L_s=8$ and frozen set thresholds designed for block error rate below $0.001$ for channel crossover $p$, and average distortion below $D$. Then for $N=1024$ and $p\in \{0.11,0.21,0.31\}$ we have $F_c \cap F_s^c = \emptyset$ for $D-p\ge 0.1$. For larger values of $N$, $|F_c\cap F_s^c|$ vanishes even starting from smaller values of $D-p$.
On the other hand, $|\hat{F}_c|$ is much larger, e.g. for $p=0.11$ and $D=0.25,0.45$ we have $|\hat{F}_c / N| = 0.175, 0.207$ for blocklength $N=1024$,  and $|\hat{F}_c / N| = 0.157, 0.18$ for $N=2048$.

We now study the behavior of the set $F_c \cap F_s^c$, which represents the deviation from perfect code nestedness, and prove that for $p$ sufficiently small or $D$ sufficiently large, $|F_c \cap F_s^c| = O(N^\xi)$ where $\xi>0$ can be chosen arbitrarily small, thus improving \eqref{eq:FcFsON} significantly for the case of sufficiently small $p$ or sufficiently large $D$. In fact, we prove this result for any pair of binary memoryless symmetric (BMS) channels, $W(p)$ and $W(D)$, with Bhattacharyya parameters $Z(p)$ and $Z(D)$, without requiring degradedness of $W(D)$ with respect to $W(p)$, which is important for the generalization of the results for side information channels beyond binary DP.

Consider the random processes $Z_n(p)$ and $Z_n(D)$, $n=0,1,\ldots,\log N$. They both follow the same sequence of Arikan's channel transformations, defined by \cite[Eq. (22)]{arikan2009channel} if $B_n=0$ and by \cite[Eq. (23)]{arikan2009channel} if $B_n=1$, where $\left<B_1,B_2,\ldots,B_{\log N}\right>$ defines the index of some polar sub-channel. Initially $Z_0(p) = Z(p)$ and $Z_0(D) = Z(D)$. Denote
\bre
	\epsilon_{1,n} \defined Z_n(p), \qquad
	\epsilon_{2,n} \defined 1-Z_n(D)
	\label{eq:eps1_def}\\
\ere
In particular, $\epsilon_{1,0} = Z(p)$ and $\epsilon_{2,0} = 1-Z(D)$.
Now, if $B_{n+1}=1$ then
\begin{align}
	\lefteqn{(\epsilon_{1,n+1},\epsilon_{2,n+1}) = \left( \epsilon_{1,n}^2,1-(Z_n(D))^2 \right)} \qquad\\
	&= \left(\epsilon_{1,n}^2,1-(1-\epsilon_{2,n})^2\right) = \left( \epsilon_{1,n}^2,2 \epsilon_{2,n} - \epsilon_{2,n}^2 \right)
	\label{eq:op_plus}
\end{align}
If $B_{n+1}=0$ then
\begin{align}
(\epsilon_{1,n+1},\epsilon_{2,n+1}) &\le
\left( 2\epsilon_{1,n} - \epsilon_{1,n}^2,\psi(\epsilon_{2,n}) \right)
\label{eq:op_minus}\\
\psi(x) &\defined 1-(1-x)\sqrt{1+2x-x^2}
\end{align}
where the inequality actually denotes two inequalities, one for each term. These inequalities follow from the following well known relations, e.g. \cite{arikan2009channel}, \cite[Eq. (13)]{hassani2014finite}, for $B_{n+1}=0$,
\begin{align}
Z_{n+1}(p) &\le 2Z_n(p) - Z_n(p)^2
\label{eq:BEC_extreme}\\
Z_{n+1}(D) &\ge Z_n(D) \sqrt{2-(Z_n(D))^2}
\label{eq:BSC_extreme}
\end{align}

\begin{lemma} \label{lem:epsilon_tilda}
	Consider the process $(\tilde{\epsilon}_{1,n},\tilde{\epsilon}_{2,n})$ defined by $\tilde{\epsilon}_{1,0} = Z(p)$, $\tilde{\epsilon}_{2,0}=1-Z(D)$, and, for $n=1,2,\ldots,\log N$,
	\bre
	(\tilde{\epsilon}_{1,n+1},\tilde{\epsilon}_{2,n+1}) = 
	\left\{
	\begin{array}{ll}
		\left( \tilde{\epsilon}_{1,n}^2,2 \tilde{\epsilon}_{2,n} - \tilde{\epsilon}_{2,n}^2 \right) & \hbox{if $B_{n+1}=1$} \\
		\left( 2\tilde{\epsilon}_{1,n} - \tilde{\epsilon}_{1,n}^2,\psi(\tilde{\epsilon}_{2,n})
		\right)  & \hbox{if $B_{n+1}=0$}
	\end{array}
	\right.
	\label{eq:next_eps}
	\ere
	For $n=\log N$ we have $N$ possible realizations of the process corresponding to all possible sub-channels. The number of realizations for which both $\tilde{\epsilon}_{1,\log N} \ge \delta_N(p)$ and $\tilde{\epsilon}_{2,\log N} > 1-\delta_N(D)$ is an upper bound on $|F_c \cap F_s^c|$.
\end{lemma}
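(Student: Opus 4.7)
The plan is to reduce the claim to a pointwise comparison between the ``true'' process $(\epsilon_{1,n},\epsilon_{2,n})$ of \eqref{eq:eps1_def} and the upper-bound process $(\tilde\epsilon_{1,n},\tilde\epsilon_{2,n})$ of \eqref{eq:next_eps}, carried out along every realization of the polarization sequence $(B_1,\ldots,B_{\log N})$. Once this comparison is in place, the counting bound is immediate.

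First I would rewrite $F_c\cap F_s^c$ in the $(\epsilon_1,\epsilon_2)$ coordinates: by \eqref{eq:Fs}--\eqref{eq:Fc} together with the identifications $\epsilon_{1,n}=Z_n(p)$ and $\epsilon_{2,n}=1-Z_n(D)$, a sub-channel index $i$ whose binary expansion is $(B_1,\ldots,B_{\log N})$ belongs to $F_c\cap F_s^c$ if and only if $\epsilon_{1,\log N}\ge\delta_N(p)$ and $\epsilon_{2,\log N}>1-\delta_N(D)$. Hence $|F_c\cap F_s^c|$ equals the number of realizations for which these two inequalities hold on the true process.

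Next I would prove by induction on $n$ that along every realization $\tilde\epsilon_{i,n}\ge\epsilon_{i,n}$ for $i=1,2$. The base case is equality by definition. For the inductive step, when $B_{n+1}=1$ the standard ``$+$''-transform identity $Z(W^+)=Z(W)^2$ for any BMS channel, together with the algebraic rewriting $1-Z_n(D)^2=2\epsilon_{2,n}-\epsilon_{2,n}^2$, gives \eqref{eq:op_plus} as an exact equality for the true process, and the $\tilde\epsilon$ process follows the same recursion. When $B_{n+1}=0$, the ``$-$''-transform bounds \eqref{eq:BEC_extreme}--\eqref{eq:BSC_extreme} yield \eqref{eq:op_minus} for the true process, whereas $(\tilde\epsilon_{1,n+1},\tilde\epsilon_{2,n+1})$ is defined by applying the same upper-bound maps as equalities. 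Combining the inductive hypothesis with the non-decreasing monotonicity of the maps $x\mapsto x^2$, $x\mapsto 2x-x^2$, and $x\mapsto\psi(x)$ on $[0,1]$ closes the step.

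The only ingredient that is not entirely routine, and the part I expect to require the most care, is verifying that $\psi$ is non-decreasing on $[0,1]$; this is elementary but easy to overlook. A direct differentiation gives $\psi'(x)=2x(2-x)/\sqrt{1+2x-x^2}$, which is non-negative because $1+2x-x^2=2-(1-x)^2\ge 1>0$ on $[0,1]$. With the pointwise domination $\tilde\epsilon_{i,\log N}\ge\epsilon_{i,\log N}$ established, every realization contributing to $|F_c\cap F_s^c|$ also satisfies $\tilde\epsilon_{1,\log N}\ge\delta_N(p)$ and $\tilde\epsilon_{2,\log N}>1-\delta_N(D)$, so the number of realizations of the $\tilde\epsilon$ process satisfying the two inequalities is an upper bound on $|F_c\cap F_s^c|$, as claimed.
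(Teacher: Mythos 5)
Your proof is correct and follows essentially the same route as the paper, which simply notes that the claim follows from \eqref{eq:op_plus}--\eqref{eq:op_minus} together with the monotonicity of the maps in \eqref{eq:next_eps}, including $\psi$. Your inductive domination argument and the explicit computation $\psi'(x)=2x(2-x)/\sqrt{1+2x-x^2}\ge 0$ on $[0,1]$ are just the details the paper leaves to the reader, and both check out.
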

The proof follows from \eqref{eq:op_plus}-\eqref{eq:op_minus} and the fact that all the functions that appear in \eqref{eq:next_eps}, including $\psi()$, are monotonically increasing for $\tilde{\epsilon}_{i,n} \in (0,1)$, $i=1,2$.

We note that the bound provided by Lemma \ref{lem:epsilon_tilda} on $|F_c \cap F_s^c|$ is monotonically increasing in $Z(p)$ and monotonically decreasing in $Z(D)$.
We used Lemma \ref{lem:epsilon_tilda} to compute bounds on $|F_c \cap F_s^c|$ for some $p$ and $D$ values, and compare with the actual value of $|F_c \cap F_s^c|$. $\delta_N(p)$ and $\delta_N(D)$ were set to obtain a block error probability $0.001$, and average distortion $D$ with $L_c=8$ and $L_s=8$. 
As an example, for $N=1024$ and $p=0.11$ ($p=0.21$, respectively), $|F_c \cap F_s^c|$ vanishes for $D-p \ge 0.1$ ($D-p \ge 0.1$) while the bound requires $D-p \ge 0.16$ ($D-p \ge 0.14$). For $N=2048$ and $p=0.11$ (same results for $p=0.21$), $|F_c \cap F_s^c|$ vanishes for $D-p \ge 0.08$ while the bound requires $D-p \ge 0.14$.

We proceed the analysis by defining $\hepsilon_{1,n}$, $\hepsilon_{2,n}$ by
\bre
	\hepsilon_{1,n} \defined Z_n(p), \qquad
	\hepsilon_{2,n} \defined 1- \left(Z_n(D)\right)^2 \label{eq:heps1_def}
\ere
such that $\hepsilon_{1,n} = \epsilon_{1,n}$ (see \eqref{eq:eps1_def}).
Similarly to \eqref{eq:op_plus} we have that if $B_{n+1}=1$ then
\begin{align}
	\lefteqn{
		\left( \hepsilon_{1,n+1}, \hepsilon_{2,n+1} \right) = \left( \hepsilon_{1,n}^2, 1-(Z_n(D))^4 \right) =
	}\qquad\qquad\\
	&\left( \hepsilon_{1,n}^2, 1-(1-\hepsilon_{2,n})^2 \right) = \left( \hepsilon_{1,n}^2, 2\hepsilon_{2,n} - \hepsilon_{2,n}^2 \right)
\end{align}

Similarly to \eqref{eq:op_minus}, if $B_{n+1}=0$ then
\bre
\left( \hepsilon_{1,n+1}, \hepsilon_{2,n+1} \right)
\le 
\left( 2\hepsilon_{1,n} - \hepsilon_{1,n}^2,\hepsilon_{2,n}^2\right)
\ere
The inequality for the left terms is due to \eqref{eq:BEC_extreme} and the inequality for the right terms is due to \eqref{eq:BSC_extreme} that can be rewritten as
\bre
\left(Z_{n+1}(D)\right)^2 \ge \left( Z_n(D) \right)^2 \left( 2 - \left(Z_n(D)\right)^2 \right)
\ere
Thus we have
\bre
(\hepsilon_{1,n+1},\hepsilon_{2,n+1}) 
\left\{
\begin{array}{ll}
	=   \left( \hepsilon_{1,n}^2,2 \hepsilon_{2,n} - \hepsilon_{2,n}^2 \right) & \hbox{if $B_{n+1}=1$} \\
	\le \left( 2\hepsilon_{1,n} - \hepsilon_{1,n}^2,\hepsilon_{2,n}^2  \right) & \hbox{if $B_{n+1}=0$}
\end{array}
\right.
\label{eq:BMS_eps_transform}
\ere

We now claim the following key lemma.
\begin{lemma}
	\bre
	\hepsilon_{1,n} \hepsilon_{2,n} < \gamma \quad \forall n
	\ere
	where $\gamma = \gamma(Z(p),Z(D))$ becomes arbitrarily small for $Z(p)$ sufficiently small or $Z(D)$ sufficiently large.
	\label{lem:BMS}
\end{lemma}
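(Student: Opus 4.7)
The plan is to reduce the problem to the BEC Bhattacharyya case, exploit the resulting coupling, and then argue via a potential function.

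First, from the pair of inequalities in \eqref{eq:BMS_eps_transform} together with the monotonicity on $[0,1]$ of the coordinate maps $x\mapsto x^2$ and $x\mapsto 2x-x^2$, I would dominate $(\hepsilon_{1,n},\hepsilon_{2,n})$ pathwise by $(\bar X_n, \bar V_n)$ satisfying the \emph{exact} BEC recursions $(\bar X,\bar V)\mapsto (\bar X^2,\, 2\bar V-\bar V^2)$ under $B_{n+1}=1$ and $(\bar X,\bar V)\mapsto (2\bar X-\bar X^2,\, \bar V^2)$ under $B_{n+1}=0$, with $\bar X_0=Z(p)$ and $\bar V_0=1-Z(D)^2$. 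Induction on $n$ (each coordinate of the $\hepsilon$ recursion is dominated by the BEC coordinate and both updates are monotone increasing) gives $\hepsilon_{1,n}\hepsilon_{2,n}\le \bar X_n\bar V_n$ for every $n$ and every sub-channel, so it suffices to bound this product.

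Second, the substitution $W_n:=1-\bar V_n$ (so that $W_0=Z(D)^2$) converts the $\bar V$-recursion into \emph{the same} BEC map that $\bar X_n$ satisfies, driven by the common $B$-sequence: $W\mapsto W^2$ under $B=1$ and $W\mapsto 2W-W^2$ under $B=0$. In the regime $Z(p)\le Z(D)^2$---which is implied by either ``$Z(p)$ sufficiently small'' or ``$Z(D)$ sufficiently close to $1$''---the strict monotonicity of the two maps gives, by induction on $n$, $\bar X_n\le W_n$ pathwise, equivalently $\bar X_n + \bar V_n \le 1$.

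Third, I introduce the potential $\Phi_n:=\bar X_n+\bar V_n$. A direct computation using the two recursions shows that the per-step increment is $\pm[\bar V_n(1-\bar V_n)-\bar X_n(1-\bar X_n)]$ with opposite signs under $B=1$ and $B=0$, so $\Phi_n$ is a $[0,1]$-valued martingale with $\Phi_0=Z(p)+(1-Z(D)^2)$. Hence, by AM--GM, $\bar X_n\bar V_n\le \Phi_n^2/4$. To convert this into a uniform pathwise bound, I would combine the pathwise constraint $\bar X_n+\bar V_n\le 1$ with a ratcheting case analysis: whenever $\bar X_n$ has grown to be of order $1$, this is the consequence of a long run of $B=0$ steps, during which $\bar V_n$ has been squared repeatedly and is therefore doubly-exponentially small, and symmetrically for the other extreme. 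Aggregating both cases gives an explicit $\gamma=\gamma(Z(p),Z(D))$ that tends to $0$ in either announced regime.

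The main obstacle is precisely this last, uniform pathwise upgrade: the AM--GM estimate $\Phi_n^2/4$ can in principle approach $1/4$ on atypical sample paths, and the naive per-step bound $\bar X_{n+1}\bar V_{n+1}\le 2\max(\bar X_n,\bar V_n)\cdot \bar X_n\bar V_n$ loses a factor of up to $2$ at each step. The careful accounting of how the cost of growing one coordinate (one $B$-step of the ``wrong'' type) forces the conjugate coordinate to decay by squaring is the crux of the proof, and is where I would invest the most effort.
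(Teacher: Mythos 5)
Your Steps 1 and 2 are sound, and in fact they recover two structural ingredients that the paper's proof also uses: the $\hepsilon$ process is dominated pathwise by the pure BEC recursions, and in the regime $Z(p)\le Z(D)^2$ (which is exactly the regime of the lemma) the invariant $\hepsilon_{1,n}+\hepsilon_{2,n}\le 1$ holds along every path. But the argument stops precisely where the lemma needs to be proved. The martingale/AM--GM step only yields the pathwise bound $\bar X_n\bar V_n\le 1/4$, since $\Phi_n$ is controlled in expectation only and can approach $1$ on individual paths, as you concede; and the proposed ``ratcheting case analysis'' is a heuristic, not a proof: you never quantify how small the conjugate coordinate must be by the time one coordinate reaches a given level, and your own per-step estimate $\bar X_{n+1}\bar V_{n+1}\le 2\max(\bar X_n,\bar V_n)\,\bar X_n\bar V_n$ is consistent with the product growing. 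As written, no $\gamma=\gamma(Z(p),Z(D))\to 0$ is actually produced, so the central claim remains unproven.

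The missing idea is a single potential that is monotone along every path and couples the two coordinates multiplicatively rather than additively. The paper uses the log-odds sum
\bre
R_n=\log\frac{\hepsilon_{1,n}}{1-\hepsilon_{1,n}}+\log\frac{\hepsilon_{2,n}}{1-\hepsilon_{2,n}}
\ere
and shows that its per-step increment (for instance $\log\frac{\hepsilon_1(2-\hepsilon_2)}{(1+\hepsilon_1)(1-\hepsilon_2)}$ when $B_{n+1}=1$) is negative exactly when $\hepsilon_1+\hepsilon_2<1$ --- i.e., exactly under your Step 2 invariant. Hence $R_n\le R_0=\log\gamma$ with $\gamma=\frac{Z(p)}{1-Z(p)}\cdot\frac{1-Z^2(D)}{Z^2(D)}$, and therefore $\hepsilon_{1,n}\hepsilon_{2,n}<2^{R_n}\le\gamma$. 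In this coordinate system the ``cost'' of growing one coordinate and the squaring decay of the other cancel exactly, so no case analysis over sample paths is needed. I would replace your Step 3 entirely with this potential; your Steps 1 and 2 can then be dropped or kept only as motivation, since the $\hepsilon_1+\hepsilon_2<1$ invariant falls out of $R_n<0$ directly.
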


\begin{proof}
	For notational convenience, denote by $\hepsilon_1\defined\hepsilon_{1,n}$, $\hepsilon_2\defined\hepsilon_{2,n}$, and
	\bre
	R_n \defined \log \frac{\hepsilon_1}{1-\hepsilon_1} + \log \frac{\hepsilon_2}{1-\hepsilon_2}
	\ere
	Hence,
	\bre
	R_{n+1} 
	\left\{
	\begin{array}{ll}
		= \log \frac{\hepsilon_1^2}{1-\hepsilon_1^2} + \log \frac{2\hepsilon_2-\hepsilon_2^2}{(1-\hepsilon_2)^2}
		& \hbox{if $B_{n+1}=1$} \\
		\le \log \frac{2\hepsilon_1-\hepsilon_1^2}{(1-\hepsilon_1)^2} + \log \frac{\hepsilon_2^2}{1-\hepsilon_2^2}
		& \hbox{if $B_{n+1}=0$}
	\end{array}
	\right.
	\label{eq:Rn1_}
	\ere
	We will first show that for all $n$, if $R_n\le A$, where $A<0$, then $R_{n+1} \le A$. For that, it is sufficient to consider the first case in \eqref{eq:Rn1_} ($B_{n+1}=1$), since the same proof holds for the other case, $B_{n+1}=0$. Now, the first case in \eqref{eq:Rn1_} can be written as
	\bre
	R_{n+1} = R_n + \log \frac{\hepsilon_1 (2-\hepsilon_2)}{(1+\hepsilon_1)(1-\hepsilon_2)}
	\label{eq:Rn11_A}
	\ere
	Since $R_n<A<0$, we have
	\bre
	\frac{\hepsilon_1}{1-\hepsilon_1} \cdot \frac{\hepsilon_2}{1-\hepsilon_2} < 1
	\ere
	Hence, $\hepsilon_1 + \hepsilon_2 < 1$. Therefore,
	\bre
	\frac{2-\hepsilon_2}{1-\hepsilon_2} = 1 + \frac{1}{1-\hepsilon_2} < 1 + \frac{1}{\hepsilon_1} = \frac{\hepsilon_1+1}{\hepsilon_1}
	\ere
	Using this inequality in \eqref{eq:Rn11_A} yields $R_{n+1}<R_n\le A$ as claimed. We conclude that if $R_0=A<0$ then $R_n\le R_0$ for all $n$, so that
	\begin{align}
		\lefteqn{
			\log \hepsilon_{1,n} + \log \hepsilon_{2,n} < R_n \le R_0 =}\qquad\\ 
		&\log \frac{Z(p)}{1-Z(p)} + \log \frac{1-Z^2(D)}{Z^2(D)} \defined \log \gamma(Z(p),Z(D))
	\end{align}
	where the first inequality follows from the fact that $1-\hepsilon_{i,n}<1$, for $i=1,2$.
	Note that $\gamma(Z(p),Z(D))$ can be made arbitrarily small by choosing $Z(p)$ sufficiently small or $Z(D)$ sufficiently large.
\end{proof}
We can now state and prove our main result for the nested codes property.
\begin{theorem}
	Consider the case where $W(p)$ and $W(D)$ are BMS channels with Bhattacharyya parameters $Z(p)$ and $Z(D)$. Given $0<Z(p)<Z(D)<1$, suppose either a small enough $Z(p)$ or a large enough $Z(D)$. Then, $|F_c \cap F_s^c| = O(N^\xi)$ where $\xi>0$ can be set arbitrarily small.
	\label{thm:main_result_BMS}
\end{theorem}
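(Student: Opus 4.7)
The plan is to lift Lemma \ref{lem:BMS} from a pointwise bound on $\hepsilon_{1,n}\hepsilon_{2,n}$ to a sharp tail bound on the joint polarization trajectory via an exponential Chernoff argument whose exponent sharpens as $\gamma\to 0$. Viewing each subchannel as a realization driven by i.i.d.\ uniform $(B_1,\ldots,B_{\log N})$, we have $|F_c\cap F_s^c|=N\cdot P_N$ where $P_N$ is the probability of $\hepsilon_{1,N}\ge\delta_N(p)$ and $\hepsilon_{2,N}>1-\delta_N(D)^2$. Setting $X_n\defined -\log(\hepsilon_{1,n}\hepsilon_{2,n})$, this event is $X_N\le T_X\defined -\log(\delta_N(p)(1-\delta_N(D)^2))$. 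Under the thresholds of \cite{korada2010polar} one has $T_X=\Theta(\log N)$, so it suffices to show $P_N=O(N^{\xi-1})$.

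From \eqref{eq:BMS_eps_transform}, the step $X_{n+1}-X_n$ equals $-\log(\hepsilon_{1,n}(2-\hepsilon_{2,n}))$ when $B_{n+1}=1$ and is at least $-\log(\hepsilon_{2,n}(2-\hepsilon_{1,n}))$ when $B_{n+1}=0$. A standard Chernoff bound with parameter $\lambda>0$ therefore yields
\begin{equation}
P_N\le 2^{\lambda(T_X-X_0)}\,\psi(\lambda,\gamma)^{\log N},\quad \psi(\lambda,\gamma)\defined\sup_{\hepsilon_1\hepsilon_2<\gamma}\tfrac{1}{2}\bigl[(\hepsilon_1(2-\hepsilon_2))^\lambda+(\hepsilon_2(2-\hepsilon_1))^\lambda\bigr].
\end{equation}
I would then show that the extremum of $\psi$ is achieved at one of two candidates: a corner such as $(\hepsilon_1,\hepsilon_2)=(0,1)$, yielding value $2^{\lambda-1}$, or a symmetric point on the constraint with $\hepsilon_1=\hepsilon_2=\sqrt{\gamma}$, yielding value $\approx 2^\lambda\gamma^{\lambda/2}$. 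Whenever $\lambda\ge 2/\log(1/\gamma)$ the corner dominates, so $\psi(\lambda,\gamma)\le 2^{\lambda-1}$.

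Picking $\lambda=\lambda^\star\defined 2/\log(1/\gamma)$ and using $T_X-X_0=\Theta(\log N)$, routine algebra yields $|F_c\cap F_s^c|\le N^{O(1)/\log(1/\gamma)}$. Given any $\xi>0$, we take $Z(p)$ small enough (or $Z(D)$ large enough) so that, via Lemma \ref{lem:BMS}, $\gamma$ is small enough that $O(1)/\log(1/\gamma)<\xi$; this concludes the proof.

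The main obstacle is the constrained maximization of $\psi(\lambda,\gamma)$. A careless bound that ignores the feasibility constraint $\hepsilon_1\hepsilon_2<\gamma$ (for instance, using only the one-sided jump bound $X_{n+1}-X_n\ge -1$ together with Azuma/Hoeffding) produces an exponent independent of $\gamma$ that would improve \eqref{eq:FcFsON} only by a fixed constant. Extracting a $\gamma$-dependent exponent from the MGF---and verifying that the corner $(\hepsilon_1,\hepsilon_2)=(0,1)$ really is the binding extremizer in the relevant regime of $\lambda$, rather than an interior symmetric point---is precisely what turns that fixed improvement into the $O(N^\xi)$ of the theorem for arbitrarily small $\xi$.
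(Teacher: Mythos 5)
Your overall architecture coincides with the paper's: both proofs track the product $\hepsilon_{1,n}\hepsilon_{2,n}$, both are powered entirely by Lemma~\ref{lem:BMS}, and both finish with a large-deviations bound over the $\log N$ fair coin flips (the paper counts the ``contracting'' steps via the argument of Arikan's Section~IV.B; your exponential-moment bound on $X_n=-\log(\hepsilon_{1,n}\hepsilon_{2,n})$ is the same computation in MGF form). The genuine gap is exactly in the step you flag as the main obstacle: your evaluation of $\psi(\lambda,\gamma)$ is wrong, and with your choice $\lambda^\star=2/\log(1/\gamma)$ the proof does not deliver an arbitrarily small exponent. The supremum is not the larger of your two candidate values. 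On the constraint boundary near $(\hepsilon_1,\hepsilon_2)=(1,\gamma)$ the objective tends to $\tfrac12\bigl[(2-\gamma)^\lambda+\gamma^\lambda\bigr]\approx 2^{\lambda-1}+\tfrac12\gamma^{\lambda}$, which strictly exceeds the corner value $2^{\lambda-1}$; at $\lambda=\lambda^\star$ the excess is $\gamma^{\lambda^\star}/2=1/8$, a \emph{constant}. A constant excess in the base of $\psi^{\log N}$ is fatal. The correct uniform bound that follows from Lemma~\ref{lem:BMS} (at least one of $\hepsilon_1,\hepsilon_2$ is below $\sqrt{\gamma}$, so one summand is at most $(2\sqrt{\gamma})^\lambda$ and the other at most $2^\lambda$) is $\psi(\lambda,\gamma)\le 2^{\lambda-1}\bigl(1+\gamma^{\lambda/2}\bigr)$, and at $\lambda=\lambda^\star$ this gives $\psi\le\tfrac34\cdot 2^{\lambda}$, hence $\psi^{\log N}\le N^{\lambda-1+\log(3/2)}$ and a final bound of order $N^{0.585+O(1)/\log(1/\gamma)}$ --- an improvement over \eqref{eq:FcFsON}, but not $O(N^\xi)$ for arbitrarily small $\xi$.

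The repair is to let the constant in $\lambda$ grow with $1/\xi$ rather than fixing it at $2$: take $\lambda=c/\log(1/\gamma)$, so that $\gamma^{\lambda/2}=2^{-c/2}$ and the parasitic exponent $\log\bigl(1+2^{-c/2}\bigr)$ drops below $\xi/2$ for $c$ large; then choose $Z(p)$ small enough or $Z(D)$ large enough that $\gamma$ makes the remaining $O(c)/\log(1/\gamma)$ contributions (including the $2^{\lambda(T_X-X_0)}=N^{O(\lambda)}$ factor coming from $T_X=\Theta(\log N)$) smaller than $\xi/2$ as well. With this two-stage choice your argument closes, and it is then essentially equivalent to the paper's proof, which sidesteps the MGF optimization entirely: from $\hepsilon_{1,n}\hepsilon_{2,n}<\gamma$ it deduces that each polarization step multiplies the product by at most $2$ on one branch and at most $2\sqrt{\gamma}$ on the other, and then applies the Chernoff bound to the binomial count of contracting steps, obtaining $\hepsilon_{1,n}\hepsilon_{2,n}\le\zeta\,[2^{0.5+\eta}\zeta^{0.5-\eta}]^{n}$ with $\zeta=2\sqrt{\gamma}$ on an event of probability at least $1-2^{-n[1-h_2(0.5-\eta)]}$, which yields $\xi=h_2(0.5-\eta)$ directly. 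Your closing worry is therefore well placed, but the feasibility constraint is exploited most cleanly through the dichotomy above, not through identifying exact extremizers of $\psi$.
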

\begin{proof}
	By \eqref{eq:BMS_eps_transform} we have
	\bre
	\hat{\epsilon}_{1,n+1} \hat{\epsilon}_{2,n+1} \le
	\left\{
	\begin{array}{ll}
		\hat{\epsilon}_{1,n}^2 \hat{\epsilon}_{2,n} (2 - \hat{\epsilon}_{2,n}) & \hbox{if $B_{n+1}=1$} \\
		\hat{\epsilon}_{1,n}(2 - \hat{\epsilon}_{1,n})\hat{\epsilon}_{2,n}^2  & \hbox{if $B_{n+1}=0$}
	\end{array}
	\right.
	\ere
	Hence,
	\bre
	\rho_n \defined
	\frac{\hat{\epsilon}_{1,n+1} \hat{\epsilon}_{2,n+1}}{\hat{\epsilon}_{1,n}\hat{\epsilon}_{2,n}} \le 
	\left\{
	\begin{array}{ll}
		\hat{\epsilon}_{1,n}(2-\hat{\epsilon}_{2,n})   & \hbox{if $B_{n+1}=1$} \\
		\hat{\epsilon}_{2,n}(2-\hat{\epsilon}_{1,n})   & \hbox{if $B_{n+1}=0$}
	\end{array}
	\right.
	\ere
	Now, since by Lemma \ref{lem:BMS} either $\hat{\epsilon}_{1,n} < \sqrt{\gamma}$ or $\hat{\epsilon}_{2,n} < \sqrt{\gamma}$, we conclude that w.p. $1/2$, $\rho_n \le 2$ and w.p. $1/2$, $\rho_n \le 2\sqrt{\gamma}$. That is,
	\bre
	\frac{\hat{\epsilon}_{1,n+1} \hat{\epsilon}_{2,n+1}}{\hat{\epsilon}_{1,n}\hat{\epsilon}_{2,n}} \le
	\left\{
	\begin{array}{ll}
		2               & \hbox{if $\tilde{B}_{n+1}=0$} \\
		2\sqrt{\gamma}  & \hbox{if $\tilde{B}_{n+1}=1$}
	\end{array}
	\right.
	\ere
	where, similarly to $\{B_i\}$, the random variables $\{\tilde{B}_i\}$ are independent, binary, uniformly distributed (i.e., $\tilde{B}_i = (0,1)$ w.p. $(1/2,1/2)$).
	
	Following \cite[Section IV.B]{arikan2009channel}, define, for $\eta \in (0,1/2)$, the event
	\bre
	\cU_n(\eta) \defined \left\{ \sum_{i=1}^{n} \tilde{B}_i > (\frac{1}{2} - \eta) n  \right\}
	\ere
	Using the same argument as in \cite[Section IV.B]{arikan2009channel} we know that if the event $\cU_n(\eta)$ holds then
	\bre
	\hat{\epsilon}_{1,n}\hat{\epsilon}_{2,n} \le
	\zeta \cdot \left[ 2^{0.5+\eta} \zeta^{0.5-\eta} \right]^n
	\label{eq:epsilon_product}
	\ere
	for $\zeta=2\sqrt{\gamma}$. It is also known \cite[Section IV.B]{arikan2009channel}, by Chernoff's bound, that
	\bre
	P(\cU_n(\eta)) \ge 1 - 2^{-n[1-h_2(0.5-\eta)]}
	\ere
	Now, \eqref{eq:epsilon_product} implies that
	\bre
	\min(\hat{\epsilon}_{1,n},\hat{\epsilon}_{2,n}) \le
	\sqrt{\zeta} \cdot \left[ 2^{0.5+\eta} \zeta^{0.5-\eta} \right]^{n/2}
	\ere
	Setting $n=\log N$ we obtain
	\bre
	P\left(
	\min\left( Z_{\log N}(p), 1-Z_{\log N}^2(D) \right) \le N^{-a} 
	\right)
	\ge 1 - N^{-1+\xi}
	\ere
	where $\xi=h_2(0.5-\eta)$. Furthermore, $\xi>0$ can be made arbitrarily small by setting $\eta \rightarrow 0.5^-$.
	Recall that if either $Z(p)$ is small enough or $Z(D)$ large enough, then $\gamma$ can be made as small as desired. Hence $a$ can be set as large as desired (any $a>2$ is sufficient to prove the theorem).
	Recalling the connection between the random process $Z_n(p)$ ($Z_n(D)$, respectively) for $n=\log N$ and the values of the sub-channels, $Z_N^{(i)}(p)$ ($Z_N^{(i)}(D)$) \cite{arikan2009channel}, we obtain
	\bre
	\left| \left\{ i \: : \:
	Z_N^{(i)}(p) > N^{-a} 
	\:{\rm and }\:\:
	Z_N^{(i)}(D) < \sqrt{1-N^{-a}}
	\right\} \right|
	< N^{\xi}
	\ere
	Combining this with \eqref{eq:FcFsdef} concludes the proof (since we can take $a>2$).
\end{proof}

\section {Simulation Results}\label{sec:simulation_review_fin}
We now present results for our polar SCL scheme to the binary DP problem.
All the results presented here were achieved without the need to use retransmission, i.e., $F_s \cap F_c^s = \emptyset$ in all the reported cases.

Figure \ref{fig:DRGP1} presents our results for $p=0.11$ and $D\in{0.21,0.31,0.41}$. We used $L_c=8$ lists and CRC of size 8 in the decoder, and $L_s \in \{1,50\}$ in the encoder. For each experiment the figure shows the maximum polar SCL rate under the constraint of error rate below $\epsilon_p=0.001$ and average distortion below $D$.
It can be seen how increasing the number of lists in the source encoder increases the achievable rate.
\begin{figure}[htbp]
	\centering
	\includegraphics[width=1.1\columnwidth]{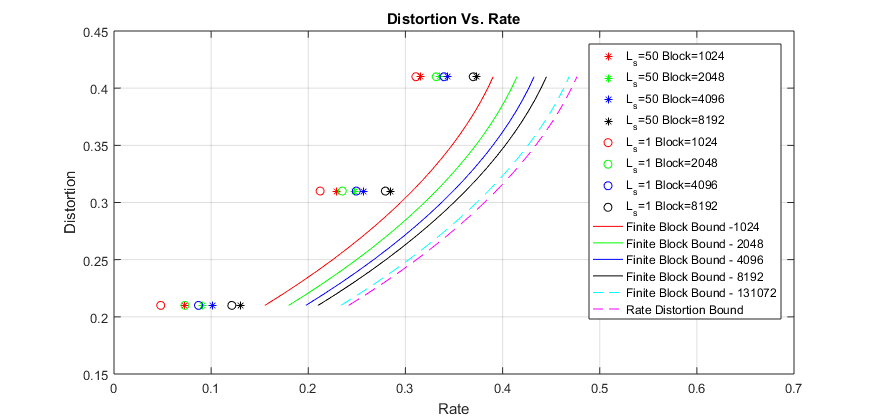} 
	\vspace{-10pt}
	\caption{Distortion vs. rate for the binary DP problem with $p=0.11$ and $D\in \left\{0.21,0.31,0.41\right\}$. Simulation used $L_c=8$ and various blocklengths and $L_s$ values. Block error rate is at most $10^{-3}$ for all experiments.}
	\label{fig:DRGP1}
\end{figure}
The figure also shows the approximated maximum achievable rate of any nested coding scheme for binary DP. It was obtained using the approximated maximum achievable channel coding rate, and minimum achievable lossy compression rate in a finite blocklength regime, in \cite[Theorem 52]{polyanskiy2010channel} and \cite[Eqs. (1), (11) and (93)]{kostina2012fixed} respectively. Since the code is nested, its approximated binary DP maximum achievable rate is obtained by subtracting these two approximations from \cite{polyanskiy2010channel} and \cite{kostina2012fixed}.
Let $N$ be the blocklength. Denote by $\epsilon_p$ the block error rate, and by $\epsilon_D$ the distortion constraint violation rate. Then
\bre
R_{GP}\left(N,D,p,\epsilon_p,\epsilon_D\right) = \overline{R}_{D} - \overline{R}_{p}
\ere
\begin{align} 
	\overline{R}_{D}     &\approx 1-h_2(D)+\sqrt{\frac{V(D)}{N}}Q^{-1}\left(\epsilon_D\right)
	\\
	V(D) &= D \log^2 D + (1-D) \log^2 (1-D) - h_2^2(D)
	\\
	\overline{R}_{p}     &\approx 1-h_2(p)+\frac{\log N}{2N}
	\\
	&- \sqrt\frac{p(1-p)}{N} \log \left(\frac{1-p}{p}\right)Q^{-1}\left(\epsilon_p\right)
\end{align}
where $Q^{-1}()$ is the inverse of the standard Gaussian complementary cumulative distribution function.
The approximated bounds in Fig. \ref{fig:DRGP1} were obtained by setting $\epsilon_p = 0.001$ and $\epsilon_D=0.5$, which means that we are taking the standard rate distortion bound for the lossy source coding part (since in this experiment we only set an average distortion constraint, as in \cite{korada2010polar}). Repeating the same experiment with $\epsilon_D=0.01$ yields an even smaller gap between the achievable rates using the SCL polar coding scheme and the bounds.

We have also compared our polar SCL coding scheme to the superposition coding scheme in \cite{bennatan2006scs} for long blocklength codes, using LDPC codes for channel coding and convolutional codes for source coding. In \cite{bennatan2006scs} the blocklength was $N$=100,000. In our experiments we used both $N=2^{17}=\mbox{131,072}$ and $N=2^{16}=\mbox{65,536}$. The results, shown in Table \ref{tb:scl_ldpc}, show comparable results for long blocklength codes.
The SCL results were tested 100 times as in \cite{bennatan2006scs}.
We note that the method in \cite{bennatan2006scs} required considerable computational resources (150--200 belief propagation iterations and 10--15 BCJR iterations with 1024 states in the decoding trellis). Results for shorter blocklengths are not reported in \cite{bennatan2006scs}.
\begin{table}[H]
	\caption{LDPC v.s. SCL binary DP performance, $p=0.1$, $D=0.3$}
	\label{tb:scl_ldpc}
	\begin{tabular}{|l|l|l|l|l|l|}
		\hline
		Description & Block size                & DP Rate & Bit Error Rate   \\ \hline
		LDPC scheme & 100K                          & 0.36             & $1.28\times{10^{-5}}$        \\ \hline
		SCL, $L_c=4$, $L_s=1$ & 130K & 0.353      & $5.9\times{10^{-6}}$         \\ \hline
		SCL, $L_c=8$, $L_s=8$ & 130K & 0.357      & $5.6\times{10^{-6}}$         \\ \hline
		SCL, $L_c=16$, $L_s=50$ & 130K & 0.362      & $5.0\times{10^{-6}}$         \\ \hline
		SCL, $L_c=16$, $L_s=50$ & 65K & 0.356      & $1.2\times{10^{-5}}$         \\ \hline
		Capacity  &                                &   0.42               &                \\ \hline
	\end{tabular}
\end{table}
\section*{Acknowledgment}
This research was supported by the Israel Science Foundation (grant no. 1868/18).
\bibliographystyle{IEEEtran}
\bibliography{bibliography}
\end{document}